\DeclareRobustCommand{\lyxsout}[1]{\ifx\\#1\else\sout{#1}\fi}
\theoremstyle{plain}
\newtheorem{thm}{\protect\theoremname}
\theoremstyle{definition}
\newtheorem{defn}{\protect\definitionname}
\theoremstyle{plain}
\newtheorem{prop}{\protect\propositionname}
\theoremstyle{plain}
\newtheorem*{cor*}{\protect\corollaryname}
\theoremstyle{plain}
\newtheorem{lem}{\protect\lemmaname}
\DeclareRobustCommand\textprime{\leavevmode \raise.8ex\hbox{\text@char\scriptfont\prime}}
\DeclareMathOperator{\tr}{tr}
\newcommand{\1}{\leavevmode{\rm 1\ifmmode\mkern  -4.8mu\else\kern -.3em\fi I}}
\providecommand{\customgenericname}{}
\newcommand{\newcustomtheorem}[2]{%
  \newenvironment{#1}[1]
  {%
   \renewcommand\customgenericname{#2}%
   \renewcommand\theinnercustomgeneric{##1}%
   \innercustomgeneric
  }
  {\endinnercustomgeneric}
}
\def\thm@space@setup{\thm@preskip=0pt
\thm@postskip=0pt}
\def\bbl@set@language#1{%
  \edef\languagename{%
    \ifnum\escapechar=\expandafter`\string#1\@empty
    \else\string#1\@empty\fi}%
  \@ifundefined{babel@language@alias@\languagename}{}{%
    \edef\languagename{\@nameuse{babel@language@alias@\languagename}}%
  }%
  \select@language{\languagename}%
  \expandafter\ifx\csname date\languagename\endcsname\relax\else
    \if@filesw
      \protected@write\@auxout{}{\string\select@language{\languagename}}%
      \bbl@for\bbl@tempa\BabelContentsFiles{%
        \addtocontents{\bbl@tempa}{\xstring\select@language{\languagename}}}%
      \bbl@usehooks{write}{}%
    \fi
  \fi}
\newcommand{\DeclareLanguageAlias}[2]{%
  \global\@namedef{babel@language@alias@#1}{#2}%
}
\providecommand{\corollaryname}{Corollary}
\providecommand{\definitionname}{Definition}
\providecommand{\lemmaname}{Lemma}
\providecommand{\propositionname}{Proposition}
\providecommand{\theoremname}{Theorem}
\begin{document}
\title{Ergodicity, eigenstate thermalization, and the foundations of statistical
mechanics \\
in quantum and classical systems}
\author{Lorenzo Campos Venuti}
\affiliation{Department of Physics and Astronomy, University of Southern California,
CA 90089, USA}
\author{Lawrence Liu}
\affiliation{Department of Physics and Astronomy, University of Southern California,
CA 90089, USA}
\date{\today}
\begin{abstract}
Boltzmann's ergodic hypothesis furnishes a possible explanation for
the emergence of statistical mechanics in the framework of classical
physics. In quantum mechanics, the Eigenstate Thermalization Hypothesis
(ETH) is instead generally considered as a possible route to thermalization.
This is because the notion of ergodicity itself is vague in the quantum
world and it is often simply taken as a synonym for thermalization.
Here we show, in an elementary way, that when quantum ergodicity is
properly defined, it is, in fact, equivalent to ETH. In turn, ergodicity
is equivalent to thermalization, thus implying the equivalence of
thermalization and ETH. This result previously appeared in {[}De Palma
\emph{et al.}, Phys.~Rev.~Lett.~\textbf{115}, 220401 (2015){]},
but becomes particularly clear in the present context. We also show
that it is possible to define a classical analogue of ETH which is
implicitly assumed to be satisfied when constructing classical statistical
mechanics. Classical and quantum statistical mechanics are built according
to the familiar standard prescription. This prescription, however,
is ontologically justified only in the quantum world.
\end{abstract}
\maketitle

\paragraph*{Introduction. }

A possible mechanistic justification of classical statistical mechanics
proceeds via the ergodic hypothesis of Boltzmann, i.e., the assumption---to
be proven in the cases at hand---that a given classical Hamiltonian
dynamical system is ergodic. As we will see, this is, in fact, not
the whole story and more is needed. In any case, in the quantum world
it is not entirely clear what constitutes a meaningful notion of ergodicity,
let alone whether or not such a notion implies thermalization as it
does classically. In fact often quantum ergodicity is not separately
defined but simply taken as a synonym for thermalization (see e.g.~\citep{pal_many-body_2010}
or footnote 1 in \citep{abanin_ergodicity_2018}). Recently, the Eigenstate
Thermalization Hypothesis (ETH) has emerged as a promising hypothesis
to explain thermalization in the framework of quantum mechanics as
ETH trivially implies thermalization \citep{deutsch_quantum_1991,srednicki_chaos_1994,rigol_relaxation_2007,gogolin_equilibration_2016}\footnote{Additionally, in the mathematical literature, the related concept
of \emph{quantum (unique) ergodicity}, which we might call ETH at
infinite energy, has been proved rigorously for single particle Hamiltonians
whose classical counterpart is ergodic (see e.g.~\citep{zelditch_quantum_1990,sunada_quantum_1997}). }. Quantum ergodicity, however, can and has been precisely defined
in different settings (see e.g., \citep{mazur_non-ergodicity_1969,suzuki_ergodicity_1971,ruelle_statistical_1999,alicki_quantum_2001}).
In this paper we give a few characterizations of the notion of ergodicity
in the quantum world. It is shown that ergodicity is indeed equivalent
to thermalization. Moreover, ergodicity is also seen to be equivalent
to ETH, thus implying at once an equivalence between ETH and thermalization.
That ETH is in fact not only sufficient but also necessary for thermalization
was first proven in \citep{de_palma_necessity_2015} but becomes particularly
clear in our setting.

\paragraph*{Ergodicity in quantum physics.}

Our starting point is to give a meaningful definition of ergodicity
in the quantum setting. It will be useful first to recall various
equivalent characterizations of ergodicity in the classical setting
(see \citep{cornfeld_ergodic_1982}).
\begin{thm}
\label{Characterisations-of-ergodicity}(Characterizations of ergodicity)
Let $(M,g^{t},\mu)$ be a measure-preserving system. $M$ is a measure
space, $g^{t}$ a flow, and $\mu$ a normalized, $g^{t}$-invariant
measure on $M$. Denoting $\langle f\rangle_{\mu}\equiv\int_{M}\!f(x)\,\mathrm{d}\mu(x)$,
the following are equivalent:
\begin{enumerate}
\item Any (Borel) set $X\subseteq M$ which is almost invariant ($g^{t}(X)$
differs from $X$ by a null set for all $t$) has either full measure
or zero measure.
\item For any $f,g\in L^{\infty}(M,\mu)$, 
\[
\lim_{T\to\infty}\frac{1}{T}\int_{0}^{T}\!\mathrm{d}t\,\langle f(t)g\rangle_{\mu}=\langle f\rangle_{\mu}\langle g\rangle_{\mu}.
\]
\item For any $f\in L^{1}(M,\mu)$, the averages $T^{-1}\int_{0}^{T}\!\mathrm{d}t\,f\circ g^{t}$
converge pointwise almost everywhere to $\langle f\rangle_{\mu}$.
\end{enumerate}
\end{thm}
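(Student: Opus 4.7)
The plan is to establish a cyclic chain of implications $(3) \Rightarrow (2) \Rightarrow (1) \Rightarrow (3)$, so each characterization follows from any of the others.

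For $(3) \Rightarrow (2)$, I would use Fubini's theorem to rewrite
\[
\frac{1}{T}\int_{0}^{T}\!\mathrm{d}t\,\langle f(t)g\rangle_{\mu}
=\int_{M}\!g(x)\left(\frac{1}{T}\int_{0}^{T}\!\mathrm{d}t\,f(g^{t}x)\right)\mathrm{d}\mu(x),
\]
which is legitimate because $f,g\in L^{\infty}$ makes the integrand bounded on $[0,T]\times M$. By hypothesis (3) the inner time average converges pointwise a.e.\ to $\langle f\rangle_{\mu}$, while the integrand stays dominated by the integrable function $\|f\|_{\infty}|g|$. Dominated convergence then gives the desired limit $\langle f\rangle_{\mu}\langle g\rangle_{\mu}$.

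For $(2)\Rightarrow(1)$, I would feed the hypothesis the indicator of an almost-invariant set: take $f=g=\chi_{X}$. Almost invariance of $X$ means $\chi_{X}\circ g^{t}=\chi_{X}$ up to null sets for every $t$, so $\langle f(t)g\rangle_{\mu}=\mu(X)$ identically in $t$. The left-hand side of (2) is therefore $\mu(X)$ while the right-hand side is $\mu(X)^{2}$, forcing $\mu(X)\in\{0,1\}$.

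For $(1)\Rightarrow(3)$, which I expect to be the hard step, the key input is Birkhoff's pointwise ergodic theorem applied to $f\in L^{1}(M,\mu)$: the averages $\bar{f}_{T}(x):=T^{-1}\int_{0}^{T}f(g^{t}x)\,\mathrm{d}t$ converge a.e.\ to a $g^{t}$-invariant function $\bar{f}\in L^{1}$ satisfying $\langle\bar{f}\rangle_{\mu}=\langle f\rangle_{\mu}$. The remaining task is to show that $\bar{f}$ must be constant a.e. For any $c\in\mathbb{R}$ the sublevel set $\{\bar{f}\le c\}$ is (modulo null sets) $g^{t}$-invariant for all $t$, hence almost invariant, and by (1) has measure $0$ or $1$. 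Varying $c$ over the rationals pins $\bar{f}$ to a single value a.e., and integrating identifies that value with $\langle f\rangle_{\mu}$. The genuine obstacle here is Birkhoff's theorem itself, which I would simply quote as a classical result rather than reproving; the rest of the argument is a standard invariance-of-level-sets reduction.
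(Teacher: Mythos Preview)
Your argument is correct and is the standard textbook route (Birkhoff for the hard direction, indicator functions and dominated convergence for the easy ones). Note, however, that the paper does not supply its own proof of this theorem: it is stated as classical background with a citation to Cornfeld \emph{et al.}, \emph{Ergodic Theory}, so there is nothing in the paper to compare your approach against beyond that reference, which proceeds along essentially the same lines you sketch.
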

Let us now switch to quantum mechanics. The analogue of the triple
$(M,g^{t},\mu)$ (also valid in infinite dimension), is, not surprisingly,
given by a quantum dynamical system comprising a $C^{\ast}$-algebra,
a dynamical evolution, and a quantum state. In this paper we use the
standard point of view that the approach to thermodynamic equilibrium
can be understood by studying increasingly large systems of finite
size. Hence we make the assumption that the total Hilbert space $\mathcal{H}$
is finite-dimensional. The system's Hamiltonian acting on $\mathcal{H}$
can then be written as $H=\sum_{n}E_{n}\Pi_{n}$ ($E_{n}$ eigenenergies,
$\Pi_{n}$ possibly degenerate eigen-projectors). Since we are dealing
with an isolated system we consider, as usual, a collection $V$ of
energy eigenvalues (usually called a shell, and typically but not
necessarily of the form $V=\{E|\,\overline{E}\le E_{n}\le\overline{E}+\Delta\}$).
Let $\mathcal{H}_{V}$ be the corresponding Hilbert space, $\Pi_{V}$
the orthogonal projector onto it, $\Pi_{V}=\sum_{E_{n}\in V}\Pi_{n}$,
and $\mathcal{S}_{V}$ the set of quantum states with support on $\mathcal{H}_{V}$.
The Schr\"{o}dinger dynamics is $\mathcal{E}_{t}(\cdot)=U_{t}\cdot U_{t}^{\dagger}$
with $U_{t}=\mathrm{e}^{-\mathrm{i}tH}.$ The role played by the measure
$\mu$ is now taken by a quantum state $\rho_{V}\in\mathcal{S}_{V}$
invariant under the dynamics (whence $[\rho_{V},\Pi_{n}]=0$
for all $n$ and $\rho_{V}=\Pi_{V}\rho_{V}=\rho_{V}\Pi_{V}$). The
equivalent of the ``phase space average'' is $\langle A\rangle _{V}=\tr(A\rho_{V})$
where $A$ is an observable. In principle $A$ is defined only on
$\mathcal{H}_{V}$ but it is useful to consider observables defined
on the whole space: $A\in\mathcal{B}(\mathcal{H})$. Henceforth we
write $\overline{X(t)}=\lim_{T\to\infty}T^{-1}\int_{0}^{T}\!\mathrm{d}t\,X(t)$.

We now define thermalization for a specific observable. This definition
is essentially the same as in the classical case but we single out
a particular observable to leave open the possibility that only some
(but not all) observables thermalize. Moreover, for clarity of exposition
and in analogy with the classical case, we first consider exact ergodicity.
We will relax this condition later. 
\begin{defn}
We say that $A$ thermalizes on $\mathcal{H}_{V}$ (with equilibrium
state $\rho_{V}$) if $\overline{\tr(A(t)\rho_{0})}=\langle A\rangle_{V}$
for all $\rho_{0}\in\mathcal{S}_{V}$. 
\end{defn}
As we have seen there are several equivalent characterization of ergodicity
in classical dynamical systems. We first consider characterization
2 of Theorem \ref{Characterisations-of-ergodicity} which can be trivially
reformulated quantum mechanically. Once again we retain the possibility
of ergodicity only for some specific observables.
\begin{defn}
\label{V-ergodic}We say that an observable $A$ is ergodic on the
energy shell $V$ (shell-ergodic) if $\overline{\langle A(t)A\rangle_{V}}=(\langle A\rangle_{V})^{2}$. 
\end{defn}
This definition appears in \citep{mazur_non-ergodicity_1969,suzuki_ergodicity_1971}.
We now give an alternative characterization of ergodicity which may
help clarify its meaning. 
\begin{prop}
\label{Ergo-alt} The observable $A$ is shell-ergodic if and only
if $\overline{A(t)}\Pi_{V}=\langle A\rangle_{V}\Pi_{V}$.
\end{prop}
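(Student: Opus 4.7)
The plan is to compute the infinite-time Cesaro average $\overline{A(t)}$ explicitly, rewrite both sides of the shell-ergodic identity in terms of it, and recognize the condition as the vanishing of a quantum variance.

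First, inserting $U_{t}=\sum_{n}e^{-iE_{n}t}\Pi_{n}$ into $A(t)=U_{t}^{\dagger}AU_{t}$ yields $A(t)=\sum_{n,m}e^{i(E_{n}-E_{m})t}\Pi_{n}A\Pi_{m}$. Because the eigenvalues attached to distinct spectral projectors are distinct, the time average retains only the $n=m$ contributions and leaves the pinching
\[
B\;:=\;\overline{A(t)}\;=\;\sum_{n}\Pi_{n}A\Pi_{n},
\]
which commutes with every $\Pi_{n}$ and hence with $\Pi_{V}$. Using $[\rho_{V},\Pi_{n}]=0$ and $\Pi_{n}\Pi_{m}=\delta_{nm}\Pi_{n}$, two short cyclic manipulations translate the shell-ergodic identity into a statement about $B$ alone: on the one hand $\tr(A\rho_{V})=\tr(B\rho_{V})$, and on the other $\overline{\langle A(t)A\rangle_{V}}=\tr(BA\rho_{V})=\tr(B^{2}\rho_{V})$. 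Thus shell-ergodicity is precisely $\langle B^{2}\rangle_{V}-(\langle B\rangle_{V})^{2}=0$, the vanishing of the variance of the self-adjoint operator $B$ in the state $\rho_{V}$.

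To close the argument I would identify this zero-variance condition with $B\Pi_{V}=\langle A\rangle_{V}\Pi_{V}$. Writing $c=\langle A\rangle_{V}$ and using that $B-c$ is Hermitian,
\[
\tr\bigl((B-c)^{2}\rho_{V}\bigr)\;=\;\bigl\|(B-c)\rho_{V}^{1/2}\bigr\|_{\mathrm{HS}}^{2},
\]
which vanishes iff $(B-c)\rho_{V}^{1/2}=0$. Taking $\rho_{V}$ to be faithful on $\mathcal{H}_{V}$---the natural equilibrium state on the shell, e.g.\ the microcanonical $\Pi_{V}/\dim\mathcal{H}_{V}$---the range of $\rho_{V}^{1/2}$ is all of $\mathcal{H}_{V}$, so this is equivalent to $(B-c)\Pi_{V}=0$, i.e.\ $\overline{A(t)}\Pi_{V}=\langle A\rangle_{V}\Pi_{V}$. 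The converse is immediate: $B\Pi_{V}=c\Pi_{V}$ gives $B\rho_{V}=c\rho_{V}$, hence $\langle B^{2}\rangle_{V}=c^{2}=(\langle A\rangle_{V})^{2}$. The only subtle point, and the main (mild) obstacle, is that $B$ and $\rho_{V}$ need not commute within a degenerate eigenspace; this is why the Hilbert--Schmidt identity above, rather than a naive eigenbasis diagonalization, is the cleanest way to conclude.
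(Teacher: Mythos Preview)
Your proof is correct and follows essentially the same route as the paper's: both rewrite shell-ergodicity as the vanishing of the variance $\langle(\overline{A(t)}-\langle A\rangle_{V})^{2}\rangle_{V}$ via the identities $\langle A\rangle_{V}=\langle B\rangle_{V}$ and $\overline{\langle A(t)A\rangle_{V}}=\langle B^{2}\rangle_{V}$, and then use a Hilbert--Schmidt positivity argument to pass from zero variance to $(B-c)\Pi_{V}=0$. The paper packages your final step as a separate lemma (if $\tr(X^{\dagger}X\rho)=0$ then $X$ annihilates the support projector of $\rho$), and---just as you flag explicitly---it too requires $\rho_{V}$ to be faithful on $\mathcal{H}_{V}$ in order to identify that support projector with $\Pi_{V}$.
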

\begin{proof}
Note that $[\overline{A(t)},\Pi_{V}]=0$ so $\overline{A(t)}\Pi_{V}=\Pi_{V}\overline{A(t)}=\Pi_{V}\overline{A(t)}\Pi_{V}$.
The $\Leftarrow$ direction is clear: multiply by $A\rho_{V}$---since
$\rho_{V}\in\mathcal{S}_{V}$, $\rho_{V}=\Pi_{V}\rho_{V}\Pi_{V}$---and
take the trace. For the other direction, we use the auxiliary result
$\langle\overline{A(t)}\rangle_{V}=\langle A\rangle_{V}$ (also valid
classically). We define the dephasing operator $\mathcal{D}(A)\equiv\sum_{n}\Pi_{n}A\Pi_{n}=\overline{A(t)}$
and its complement $\mathcal{Q}=\1-\mathcal{D}$. Both $\mathcal{D}$
and $\mathcal{Q}$ are orthogonal projectors with respect to the Hilbert--Schmidt
scalar product $\langle X,Y\rangle_{HS}\equiv\tr(X^{\dagger}Y)$.
Since $\rho_{V}$ is diagonal in the energy eigenbasis $\rho_{V}=\mathcal{D}(\rho_{V})$.
So $\tr(A\rho_{V})=\langle A,\mathcal{D}\rho_{V}\rangle_{HS}=\langle\mathcal{D}A,\rho_{V}\rangle_{HS}=\langle\overline{A(t)}\rangle_{V}$.
In a similar way $\langle\overline{A(t)}A\rangle_{V}=\tr(A\rho_{V}\overline{A(t)})=\tr[A\mathcal{D}(\rho_{V}\overline{A(t)})]=\langle\overline{A(t)}\,\overline{A(t)}\rangle_{V}$.
With these results $\overline{\langle A(t)A\rangle_{V}}=(\langle A\rangle_{V})^{2}$
can be written as $\langle(\overline{A(t)}-\langle A\rangle_{V})^{2}\rangle_{V}=0$.
Finally, by Lemma \ref{lem:zero} (see Appendix A), $(\overline{A(t)}-\langle A\rangle_{V})\Pi_{V}=0$;
that is, $A$ is shell-ergodic.
\end{proof}
It should be clear that this is the quantum mechanical equivalent
to the standard ergodicity statement (characterization 3), according
to which time averages of functions are the constant functions a.e.~with
values given by the equilibrium averages. Indeed, the analogue of
a constant function in quantum mechanics is a projector. Moreover,
since invariant spaces in quantum mechanics are linear subspaces,
the usual restriction ``for almost any initial state'' loses its
meaning. We will see later how a similar condition can be re-introduced
in quantum mechanics. Note that this is essentially the definition
of ergodicity given for abstract $C^{*}$-algebras \citep{ruelle_statistical_1999,alicki_quantum_2001}.
There, however, the statement is taken for all observables in the
algebra and here the shell Hilbert space appears. 

The following result illustrates the connection between thermalization
and (shell-)ergodicity.
\begin{prop}
\label{thermal-ergo}An observable $A$ thermalizes if and only if
it is shell-ergodic.
\end{prop}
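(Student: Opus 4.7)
The plan is to use Proposition \ref{Ergo-alt} as a bridge: it already recasts shell-ergodicity of $A$ as the operator identity $\overline{A(t)}\Pi_V=\langle A\rangle_V\Pi_V$. So all that remains is to show this identity is equivalent to the expectation-value condition $\overline{\tr(A(t)\rho_0)}=\langle A\rangle_V$ for every $\rho_0\in\mathcal{S}_V$.

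For the ``shell-ergodic $\Rightarrow$ thermalizes'' direction, I would pull the time average inside the trace (legitimate since $\mathcal{H}$ is finite-dimensional) and use the fact that $\rho_0=\Pi_V\rho_0$ for $\rho_0\in\mathcal{S}_V$ to write
\[
\overline{\tr(A(t)\rho_0)}=\tr\!\bigl(\overline{A(t)}\,\Pi_V\rho_0\bigr)=\langle A\rangle_V\,\tr(\Pi_V\rho_0)=\langle A\rangle_V,
\]
where the middle equality is the hypothesis from Proposition \ref{Ergo-alt} and the last uses $\tr\rho_0=1$.

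For the ``thermalizes $\Rightarrow$ shell-ergodic'' direction, set $X\equiv\overline{A(t)}-\langle A\rangle_V\Pi_V$. Thermalization gives $\tr(X\rho_0)=0$ for every state $\rho_0\in\mathcal{S}_V$. Specializing to pure states $\rho_0=|\psi\rangle\langle\psi|$ with $|\psi\rangle\in\mathcal{H}_V$ (and using a polarization-type argument, or just the standard fact that an operator whose expectation value vanishes on every state of a subspace is zero when compressed to that subspace) yields $\Pi_VX\Pi_V=0$. Since $\overline{A(t)}=\mathcal{D}(A)$ commutes with each $\Pi_n$ and therefore with $\Pi_V$, one has $\Pi_V\overline{A(t)}\Pi_V=\overline{A(t)}\Pi_V$, so $\Pi_VX\Pi_V=\overline{A(t)}\Pi_V-\langle A\rangle_V\Pi_V$; thus the operator identity of Proposition \ref{Ergo-alt} follows, giving shell-ergodicity.

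The main ``obstacle''—more a bookkeeping point than a genuine difficulty—is the reverse direction, where one has to upgrade a collection of scalar identities (expectation values on $\mathcal{S}_V$) to an operator identity. This requires restricting both $\overline{A(t)}$ and $\Pi_V$ to $\mathcal{H}_V$ and then exploiting $[\overline{A(t)},\Pi_V]=0$ to re-extend the identity; taking care that $\rho_0$ is only allowed to live on the shell while $A$ acts on all of $\mathcal{H}$ is the only place a careless computation could go astray.
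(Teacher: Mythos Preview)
Your proof is correct and follows essentially the same route as the paper: the forward direction is the ``obvious'' one, and for the converse you reduce the family of scalar conditions $\tr(X\rho_0)=0$ to the operator identity $\Pi_V X\Pi_V=0$ (the paper packages this step as Lemma~\ref{lem:nondegenerate}), then use $[\overline{A(t)},\Pi_V]=0$ to drop the left projector and recover the characterization of Proposition~\ref{Ergo-alt}. The only cosmetic difference is that you invoke pure states and polarization where the paper cites its Lemma~\ref{lem:nondegenerate}; these are equivalent arguments.
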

\begin{proof}
The $\Leftarrow$ direction is obvious. For the other direction, thermalization
means $\tr[(\overline{A(t)}-\langle A\rangle_{V})\rho_{0}]=0$, $\forall\rho_{0}\in\mathcal{S}_{V}$.
By Lemma \ref{lem:nondegenerate} (see Appendix A), $\Pi_{V}(\overline{A(t)}-\langle A\rangle_{V})\Pi_{V}=0$.
Noting that $\overline{A(t)}$ commutes with $\Pi_{V}$, we obtain
$(\overline{A(t)}-\langle A\rangle_{V})\Pi_{V}=0$.
\end{proof}
At this point we are ready to recall the definition of ETH. There
are two points to note. First, the ETH is never supposed to be valid
for the entire spectrum but only for the levels in some shell, here
$V$. The other point is that ETH is naturally made up of two statements,
a diagonal one and an off-diagonal one. It will be useful to separate
them.
\begin{defn}
ETH-D. An observable $A$ satisfies the diagonal eigenstate thermalization
hypothesis with respect to $V$ and $\rho_{V}$, if 
\begin{equation}
\Pi_{n}A\Pi_{n}=\langle A\rangle_{V}\Pi_{n},\quad\forall E_{n}\in V.\label{eq:ETHD}
\end{equation}

Of course if the eigen-projectors $\Pi_{n}$ are one-dimensional this
reduces to the standard, diagonal part of the ETH given in many references. 
\end{defn}
\begin{defn}
ETH-O. An observable $A$ satisfies the off-diagonal eigenstate thermalization
hypothesis with respect to $V$ and $\rho_{V}$, if $\Pi_{n}A\Pi_{m}=0,$
$\forall E_{n},E_{m}\in V,n\neq m$. 

If an observable satisfies both ETH-D and ETH-O we will simply talk
of ETH. ETH-O can have an important impact on the relaxation time
to the equilibrium state but not on the nature of the equilibrium
state itself \footnote{The reader may have noticed that we use the infinite-time average
in order to define quantities that have approached equilibrium. This
is almost a mathematical necessity in order to define equilibrium
quantities unambiguously though somewhat controversial from the physical
point of view. A discussion of this issue is outside the scope of
this paper. }. 

We now come to one of the main results of this paper. 
\end{defn}
\begin{prop}
An observable $A$ is shell-ergodic if and only if it satisfies ETH-D.
\end{prop}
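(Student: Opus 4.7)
The plan is to reduce everything to the alternative characterization of shell-ergodicity already proved in Proposition \ref{Ergo-alt}, namely that $A$ is shell-ergodic if and only if $\overline{A(t)}\Pi_V=\langle A\rangle_V\Pi_V$. Once this is available, the equivalence with ETH-D is essentially an algebraic bookkeeping exercise using the spectral decomposition of $H$.

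First I would write $\overline{A(t)}=\mathcal{D}(A)=\sum_n \Pi_n A\Pi_n$, a fact already used in the proof of Proposition \ref{Ergo-alt}. Multiplying by $\Pi_V=\sum_{E_n\in V}\Pi_n$ and using the orthogonality $\Pi_n\Pi_m=\delta_{nm}\Pi_n$ gives
\[
\overline{A(t)}\Pi_V=\sum_{E_n\in V}\Pi_n A\Pi_n,
\]
so the characterization in Proposition \ref{Ergo-alt} becomes
\[
\sum_{E_n\in V}\Pi_n A\Pi_n=\langle A\rangle_V\sum_{E_n\in V}\Pi_n.
\]
The $\Leftarrow$ direction is then immediate: if $\Pi_n A\Pi_n=\langle A\rangle_V\Pi_n$ for every $E_n\in V$, summing over $E_n\in V$ yields the displayed identity, hence shell-ergodicity by Proposition \ref{Ergo-alt}.

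For the $\Rightarrow$ direction, I would isolate each term in the sum by multiplying on the left (or equivalently on the right) by a fixed $\Pi_m$ with $E_m\in V$. Using $\Pi_m\Pi_n=\delta_{mn}\Pi_n$ on both sides collapses both sums and gives $\Pi_m A\Pi_m=\langle A\rangle_V\Pi_m$ for every $E_m\in V$, which is exactly ETH-D. One can equivalently phrase this as saying that the operators $\Pi_n A\Pi_n$ for distinct $n$ live in mutually orthogonal subspaces of $\mathcal{B}(\mathcal{H})$ (with respect to the Hilbert--Schmidt product), so equality of the sums forces equality term by term.

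I do not anticipate any real obstacle: the content of the statement is that the projection of the diagonal part of $A$ onto the shell equals $\langle A\rangle_V\Pi_V$, and since the shell decomposes orthogonally into the energy eigenspaces, this is equivalent to the componentwise identity. The only thing to be careful about is to use Proposition \ref{Ergo-alt} (rather than Definition \ref{V-ergodic} directly), since working with the quadratic expression $\overline{\langle A(t)A\rangle_V}$ would only reintroduce the computation already carried out there.
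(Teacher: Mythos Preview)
Your proposal is correct and follows essentially the same route as the paper: reduce to the characterization of Proposition~\ref{Ergo-alt}, obtain $\overline{A(t)}\Pi_V=\sum_{E_n\in V}\Pi_nA\Pi_n$, then sum the ETH-D relations for $\Leftarrow$ and multiply by a single $\Pi_m$ for $\Rightarrow$. Even your closing remark about avoiding Definition~\ref{V-ergodic} mirrors the paper's comment that the proof via Proposition~\ref{Ergo-alt} is ``particularly elementary.''
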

\begin{proof}
It is possible to give a proof of this fact using Definition \ref{V-ergodic}.
However using the characterization of Proposition \ref{Ergo-alt}
the proof is particularly elementary. Consider first the (standard)
$\Leftarrow$ direction. Simply sum Eq.~(\ref{eq:ETHD}) for all
$n$ such that $E_{n}\in V$. Using the fact that $\overline{A(t)}=\sum_{n}\Pi_{n}A\Pi_{n}$
we obtain
\begin{equation}
\sum_{E_{k}\in V}\Pi_{k}A\Pi_{k}=\overline{A(t)}\Pi_{V}=\langle A\rangle_{V}\Pi_{V},
\end{equation}
that is, shell-ergodicity. The proof of the other implication is equally
trivial. Simply multiply both sides by $\Pi_{n}$ with $E_{n}\in V$
and we obtain Eq.~(\ref{eq:ETHD}).
\end{proof}
Recalling Proposition \ref{thermal-ergo}, we obtain the following.
\begin{cor*}
\emph{An observable $A$ thermalizes if and only if it satisfies ETH-D. }
\end{cor*}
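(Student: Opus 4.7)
The plan is entirely transitive: the corollary is an immediate chaining of the two equivalences that have just been established, so there is essentially nothing new to prove and no genuine obstacle. I would simply invoke the two equivalences in sequence.

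More concretely, I would start from the assumption that $A$ satisfies ETH-D, i.e., $\Pi_{n}A\Pi_{n}=\langle A\rangle_{V}\Pi_{n}$ for every $E_{n}\in V$. By the preceding proposition (ETH-D $\Leftrightarrow$ shell-ergodicity), this is equivalent to $\overline{A(t)}\Pi_{V}=\langle A\rangle_{V}\Pi_{V}$, i.e., $A$ is shell-ergodic in the sense of Definition \ref{V-ergodic}. Then, applying Proposition \ref{thermal-ergo} (thermalization $\Leftrightarrow$ shell-ergodicity), shell-ergodicity of $A$ is equivalent to the statement $\overline{\tr(A(t)\rho_{0})}=\langle A\rangle_{V}$ for every $\rho_{0}\in\mathcal{S}_{V}$, which is exactly thermalization of $A$ on $\mathcal{H}_{V}$.

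Since each step in the chain is a stated biconditional, both directions follow at once: ETH-D $\Leftrightarrow$ shell-ergodicity $\Leftrightarrow$ thermalization. The only thing worth emphasizing in the write-up is that both underlying equivalences are referenced to the same pair $(V,\rho_{V})$, so there is no subtle mismatch of data between them; everything is with respect to the fixed energy shell $V$ and its associated invariant state $\rho_{V}\in\mathcal{S}_{V}$. I would keep the proof to essentially one or two lines, just quoting the two propositions.
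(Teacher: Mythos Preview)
Your proposal is correct and matches the paper's own argument exactly: the corollary is obtained simply by chaining Proposition~\ref{thermal-ergo} (thermalization $\Leftrightarrow$ shell-ergodicity) with the preceding proposition (shell-ergodicity $\Leftrightarrow$ ETH-D), all with respect to the same $(V,\rho_V)$. There is nothing to add.
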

The standard notion of ergodicity, however, is a property of a dynamical
systems, and not of single observables. This is particularly evident
in the characterization 1 of Theorem \ref{Characterisations-of-ergodicity}.
We call this property metric indecomposability to avoid confusion.
Roughly speaking, shell-ergodicity for a sufficiently large class
of observables should become equivalent to metric indecomposability.
Moreover, in the classical setting, it is usually believed (see e.g.~\citep{landau_statistical_1980})
that metric indecomposability implies that the only equilibrium state
(i.e., invariant measure) is the microcanonical one \footnote{Truthfully, the correct statement is that ergodicity (at fixed energy)
implies that the only invariant \emph{absolutely continuous }measure
is the microcanonical one \citep{cornfeld_ergodic_1982}.}.

In any case we will give a characterization of metric indecomposability
as it arises in quantum mechanics.

Let $\mathcal{E}_{t}^{\ast}$ (the star indicates Hilbert--Schmidt
adjoint) denote the Heisenberg evolution operator. It is easy to see
that shell-ergodicity means that 
\begin{equation}
\mathcal{T}(X)\equiv\overline{\mathcal{E}_{t}^{\ast}}(\Pi_{V}X\Pi_{V})=\Pi_{V}\langle X\rangle_{V},\label{eq:ergo_HV}
\end{equation}
where $\mathcal{T}:\mathcal{B}(\mathcal{H}_{V})\longrightarrow\mathcal{B}(\mathcal{H}_{V})$
is the restriction of $\overline{\mathcal{E}_{t}^{\ast}}$ to $\mathcal{B}(\mathcal{H}_{V})$.
If the above were true for all $X\in\mathcal{B}(\mathcal{H})$
then $\mathcal{T}$ would be equal to $\mathcal{T}_{MC}\equiv|\Pi_{V}\rangle_{HS}\langle\rho_{V}|_{HS}$
where we used Hilbert--Schmidt (HS) notation. Incidentally, this
is the definition of ergodicity in the theory of quantum semi-groups.
Moreover, this would imply at once that the invariant state $\rho_{V}$
is the microcanonical one. In fact the superoperator $\overline{\mathcal{E}_{t}^{\ast}}$
as well as $\mathcal{T}$ are HS self-adjoint (this is a statement
of the von Neumann ergodic theorem) and $\mathcal{T}=|\Pi_{V}\rangle\langle\rho_{V}|=\mathcal{T^{\ast}}$
implies $\rho_{V}=\Pi_{V}/\tr\Pi_{V}\equiv\rho_{MC}$. 

This appealing possibility, however, can never be realized in finite
dimension. In fact in this case one can explicitly compute
\begin{equation}
\overline{A}\Pi_{V}=\sum_{E_{n}\in V}\Pi_{n}A\Pi_{n},
\end{equation}
and this expression cannot be proportional to $\Pi_{V}$ for all $A$
unless $\Pi_{V}$ is one-dimensional\footnote{Simply take $A$ diagonal in $\mathcal{B}(\mathcal{H}_{V})$
but not proportional to $\Pi_{V}$. Then $\overline{A}\Pi_{V}=A$
which cannot be $\propto\Pi_{V}$. }, in which case we must have $\rho_{V}=|E_{n}\rangle\langle E_{n}|$
for some $E_{n}$.

The conclusion of this elementary argument is the following. In quantum
mechanics all the eigenspaces are invariant subspaces. Each eigenspace
with$H=E_{n}$ is trivially metrically indecomposable only if it is
one-dimensional. We will return to this point.

If $\Pi_{V}$ is not one-dimensional then Eq.~(\ref{eq:ergo_HV})
cannot be true for all $X$. Hence we are led to consider an approximate
version of Eq.~(\ref{eq:ergo_HV}), such as $\Vert (\mathcal{T-}\mathcal{T}_{MC})(X)\Vert \le\epsilon\Vert X\Vert $
(the norm used is the operator norm). For the same reasons as before,
even this relaxed form cannot be satisfied for all observables $X$.
In other words, the best we can hope for is \emph{approximate} shell-ergodicity
valid for \emph{some} observables. Given this state of affairs we
must modify our definitions in order to take into account possible
deviations from the ideal case. 

\begin{customdef}{1\ensuremath{'}} We say that $A$ thermalizes to
precision $\epsilon$ (or $\epsilon$-thermalizes) if $|\overline{\tr(A(t)\rho_{0})}-\langle A\rangle_{V}|\le\epsilon\Vert A\Vert $
for all $\rho_{0}\in\mathcal{S}_{V}$.

\end{customdef}

\begin{customdef}{2a\ensuremath{'}} We say that an observable $A$
is strongly $\epsilon$-shell-ergodic if $\Vert(\overline{A(t)}-\langle A\rangle_{V})\Pi_{V}\Vert\le\epsilon\Vert A\Vert$.

\end{customdef}

\begin{customdef}{2b\ensuremath{'}} We say that an observable $A$
is $\epsilon$-shell-ergodic if $|\overline{\langle A(t)A\rangle_{V}}-(\langle A\rangle_{V})^{2}|\le\epsilon^{2}\Vert A\Vert^{2}$.

\end{customdef}

\begin{customdef}{3\ensuremath{'}}We say that an observable $A$
satisfies ETH-D to precision $\epsilon$ if $\Vert \Pi_{n}A\Pi_{n}-\langle A\rangle_{V}\Pi_{n}\Vert \le\epsilon\Vert A\Vert$
for all $E_{n}\in V$.  

\end{customdef}

Note that for a non-degenerate spectrum Definition 3$'$ reduces to
$|\langle n|A|n\rangle-\langle A\rangle_{V}|\le\epsilon\Vert A\Vert$
for all $E_{n}\in V$.

As we have seen, shell-ergodicity for a single observable does not
imply that the invariant state is the microcanonical one. However
it implies that all possible ensemble averages are equal. Indeed the
same holds true for the $\epsilon$ version. Assume $A$ thermalizes
to precision $\epsilon$. Then we have $|\tr(A\rho_{V}')-\langle A\rangle_{V}|\le\epsilon\Vert A\Vert $
where $\rho_{V}'=\overline{\mathcal{E}_{t}}(\rho_{0})$ is any possible
invariant state. In other words we can always assume that the invariant
state is the one we prefer, e.g.~the microcanonical one, and have
thermalization to precision at most $2\epsilon$. 

It is straightforward to see that definitions 2a$'$ and 3$'$ are
equivalent since $\Vert(\overline{A(t)}-\langle A\rangle_{V})\Pi_{V}\Vert=\max_{E_{n}\in V}\Vert\Pi_{n}A\Pi_{n}-\langle A\rangle_{V}\Pi_{n}\Vert$.
Moreover we have (see Appendix B):
\begin{prop}
 The notions of thermalization, strong shell-ergodicity, and ETH-D
all coincide with the same $\epsilon$.
\end{prop}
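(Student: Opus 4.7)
The text has already observed that Definitions 2a$'$ and 3$'$ coincide via the identity $\Vert(\overline{A(t)}-\langle A\rangle_V)\Pi_V\Vert = \max_{E_n\in V}\Vert \Pi_n A\Pi_n - \langle A\rangle_V\Pi_n\Vert$ (which holds because $\overline{A(t)}\Pi_V = \sum_{E_n\in V}\Pi_n A\Pi_n$ is block-diagonal in the energy eigenbasis). So the only remaining task is to show that $\epsilon$-thermalization (Def.~1$'$) is equivalent to strong $\epsilon$-shell-ergodicity (Def.~2a$'$) with the \emph{same} constant $\epsilon$.

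The plan is to rewrite both conditions as statements about a single Hermitian operator, then recognize their equivalence as an instance of the standard trace-norm/operator-norm duality. Concretely, set $B \equiv (\overline{A(t)}-\langle A\rangle_V)\Pi_V$. Using $[\overline{A(t)},\Pi_V]=0$, one has $B = \Pi_V\overline{A(t)}\Pi_V - \langle A\rangle_V\Pi_V$, which is Hermitian and supported on $\mathcal{H}_V$. For $\rho_0\in\mathcal{S}_V$, cyclicity of the trace and $\rho_0 = \Pi_V\rho_0\Pi_V$ give
\[
\overline{\tr(A(t)\rho_0)} - \langle A\rangle_V = \tr\!\bigl((\overline{A(t)}-\langle A\rangle_V)\rho_0\bigr) = \tr(B\rho_0).
\]
Thus $\epsilon$-thermalization is exactly the statement $\sup_{\rho_0\in\mathcal{S}_V}|\tr(B\rho_0)| \le \epsilon\Vert A\Vert$, while strong $\epsilon$-shell-ergodicity is $\Vert B\Vert\le\epsilon\Vert A\Vert$.

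The crux is therefore the identity
\[
\sup_{\rho_0\in\mathcal{S}_V}|\tr(B\rho_0)| = \Vert B\Vert
\]
for Hermitian $B$ supported on $\mathcal{H}_V$. The $\le$ direction is H\"older's inequality: $|\tr(B\rho_0)|\le\Vert B\Vert\Vert\rho_0\Vert_1 = \Vert B\Vert$. For the $\ge$ direction, let $|v\rangle\in\mathcal{H}_V$ be a unit eigenvector of $B$ associated with the eigenvalue of largest absolute value, and take $\rho_0 = |v\rangle\langle v|\in\mathcal{S}_V$; then $|\tr(B\rho_0)| = \Vert B\Vert$. Combining, $\epsilon$-thermalization holds iff $\Vert B\Vert\le\epsilon\Vert A\Vert$ iff strong $\epsilon$-shell-ergodicity holds, all with identical $\epsilon$.

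The main (and only) conceptual obstacle is matching the constants exactly rather than just up to a factor, which is what makes the duality step essential: if one were sloppy and estimated $|\tr(B\rho_0)|$ via the Hilbert--Schmidt inner product and Cauchy--Schwarz, one would lose a dimension-dependent factor. The argument above avoids this precisely because the trace norm is the dual of the operator norm and density matrices saturate the unit trace-norm ball at rank-one projectors, which live in $\mathcal{S}_V$ as soon as $B$ is supported on $\mathcal{H}_V$. No further ingredients beyond Proposition~\ref{Ergo-alt} and the already-noted equivalence of Defs.~2a$'$ and 3$'$ are needed.
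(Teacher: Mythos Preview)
Your proof is correct and follows essentially the same route as the paper: define the Hermitian operator $B=(\overline{A(t)}-\langle A\rangle_V)\Pi_V$, reduce both conditions to statements about $B$, and invoke the identity $\sup_{\rho\in\mathcal{S}_V}|\tr(B\rho)|=\Vert B\Vert$. The paper simply asserts this last identity, whereas you spell out both directions (H\"older and the rank-one eigenprojector), but the argument is the same.
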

For completeness we also relate Definitions 2a\emph{$'$} and 2b\emph{$'$}:
\begin{prop}
$A$ strongly $\epsilon$-shell-ergodic $\Rightarrow$ $A$ $\epsilon$-shell-ergodic,
whereas $A$ $\epsilon$-shell-ergodic $\Rightarrow$ $A$ is strongly
$\sqrt{\Vert \rho_{V}^{-1}\Pi_{V}\Vert} \epsilon$-shell-ergodic.
\end{prop}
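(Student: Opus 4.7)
The plan is to reduce both directions to the quadratic identity
\[
\overline{\langle A(t)A\rangle_{V}} - (\langle A\rangle_{V})^{2} = \langle(\overline{A(t)}-\langle A\rangle_{V})^{2}\rangle_{V} = \tr(\rho_{V}B^{2}),
\]
where I set $B \equiv (\overline{A(t)}-\langle A\rangle_{V})\Pi_{V}$. This identity was already derived inside the proof of Proposition \ref{Ergo-alt}, via the dephasing operator $\mathcal{D}$ and the fact that $\rho_{V}=\mathcal{D}(\rho_{V})$. Note that $B$ commutes with $\Pi_{V}$, is Hermitian (for Hermitian $A$), and is supported on $\mathcal{H}_{V}$. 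With this identity in hand, both Definitions 2a$'$ and 2b$'$ turn into comparisons between $\tr(\rho_{V}B^{2})$ and the squared operator norm $\Vert B\Vert^{2}$, with $\Vert A\Vert$ playing only the role of a common scale.

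For the forward implication I would use the elementary bound $\tr(\rho_{V}B^{2}) \le \Vert B^{2}\Vert\,\tr(\rho_{V}) = \Vert B\Vert^{2}$, valid because $\rho_{V}$ and $B^{2}$ are positive semidefinite and $\tr\rho_{V}=1$. Inserting strong $\epsilon$-shell-ergodicity, $\Vert B\Vert \le \epsilon\Vert A\Vert$, immediately gives $|\overline{\langle A(t)A\rangle_{V}} - (\langle A\rangle_{V})^{2}| \le \epsilon^{2}\Vert A\Vert^{2}$.

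The reverse direction requires the operator inequality $\rho_{V} \ge \Pi_{V}/\Vert\rho_{V}^{-1}\Pi_{V}\Vert$, which follows from the spectral decomposition of $\rho_{V}$ on $\mathcal{H}_{V}$: the smallest eigenvalue of $\rho_{V}|_{\mathcal{H}_{V}}$ equals $1/\Vert\rho_{V}^{-1}\Pi_{V}\Vert$ (the bound is vacuous if $\rho_{V}$ is not invertible on $\mathcal{H}_{V}$). This yields
\[
\tr(\rho_{V}B^{2}) \ge \tr(\Pi_{V}B^{2})/\Vert\rho_{V}^{-1}\Pi_{V}\Vert \ge \Vert B\Vert^{2}/\Vert\rho_{V}^{-1}\Pi_{V}\Vert,
\]
where the last step invokes the standard domination $\Vert B\Vert^{2} \le \tr(\Pi_{V}B^{2})$ of the operator norm by the Hilbert--Schmidt norm. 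Rearranging and using the $\epsilon$-shell-ergodic hypothesis on the left-hand side gives $\Vert B\Vert \le \sqrt{\Vert\rho_{V}^{-1}\Pi_{V}\Vert}\,\epsilon\Vert A\Vert$, as claimed.

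The only genuine obstacle is the lossy passage from a weighted mean to a supremum in the reverse direction: $\tr(\rho_{V}B^{2})$ averages the squared spectrum of $B$ with weights dictated by $\rho_{V}$, whereas $\Vert B\Vert^{2}$ is its maximum. Consequently the overhead must pick up both the spread of the spectrum of $\rho_{V}$ (through $\Vert\rho_{V}^{-1}\Pi_{V}\Vert$, which equals $\tr\Pi_{V}$ for the microcanonical state) and the gap between the Hilbert--Schmidt and operator norms. I expect the square-rooted prefactor in the proposition to be essentially unavoidable without sharper structural assumptions on $B$.
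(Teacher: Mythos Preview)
Your proof is correct and follows essentially the same route as the paper: both reduce the comparison to the identity $\overline{\langle A(t)A\rangle_{V}}-(\langle A\rangle_{V})^{2}=\tr(\rho_{V}B^{2})$ with $B=(\overline{A(t)}-\langle A\rangle_{V})\Pi_{V}$, then bound $\tr(\rho_{V}B^{2})$ above by $\Vert B\Vert^{2}$ (forward) and below by $\Vert B\Vert^{2}/\Vert\rho_{V}^{-1}\Pi_{V}\Vert$ (reverse) using the spectral lower bound $\rho_{V}\ge\Pi_{V}/\Vert\rho_{V}^{-1}\Pi_{V}\Vert$. The only cosmetic difference is that in the reverse direction the paper bounds $\Vert\Pi_{V}B^{2}\Pi_{V}\Vert$ directly via the spectral decomposition of $\Pi_{V}B^{2}\Pi_{V}$, obtaining the slightly sharper general inequality $\Vert\Pi_{V}Y\Pi_{V}\Vert\le\tr(Y\rho_{V})\,\Vert\rho_{V}^{-1}\Pi_{V}\Vert$ for any $Y\ge0$, rather than passing through the Hilbert--Schmidt norm as you do; both routes yield the same final bound here.
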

Given the relaxed version of ETH-D one may think that it suffices
to satisfy $|\langle A\rangle_{E_{n}}-\langle A\rangle_{V}|\le\epsilon\Vert A\Vert $
for \emph{most $E_{n}\in V$. }However, it is easy to see that, even
a single deviation, say at $E_{n_{0}}$, implies that one cannot have
thermalization for all initial states. Thermalization is again restored
if one restricts to all initial states $\rho_{0}\in\mathcal{S}_{V}$
such that $\Pi_{n_{0}}\rho_{0}\Pi_{n_{0}}=0$. This remark is useful
given that in many cases one can prove a weak version of ETH-D where
the fraction of states in $V$ not satisfying ETH-D is suitably small
\citep{biroli_effect_2010,mori_weak_2016}. This is essentially also
the \emph{weak} ergodicity breaking discussed in \citep{turner_weak_2018}
and it implies thermalization for all the initial states except for
a small fraction. In the mathematical literature this marks the departure
from \emph{quantum unique ergodicity }to simple\emph{ quantum ergodicity. }

Although outside the scope of this work, here we will briefly comment
on conditions which guarantee thermalization. An almost trivial condition
is that the observable in question be proportional to the identity
in $\mathcal{H}_{V}$. For a Hamiltonian with non-degenerate spectrum
in $V$, this condition is in fact \emph{equivalent} to the full ETH.
For a local observable of the form $O=N^{-1}\sum_{x}O_{x}$, outside
of critical points the (quantum) central limit theorem essentially
implies that $(O-\langle O\rangle_{V})\Pi_{V}\simeq0$
for large system sizes $N$. Using this approach weak ETH, and, in
turn, thermalization for almost all initial states, was proved in
\citep{mori_weak_2016} for such observables.

\paragraph*{Origin of ETH and its classical version.}

Let us now return to classical dynamical systems and see how classical
statistical mechanics is built. There are two possible approaches. 

a) One possibility is to fix the Hamiltonian to have energy $E$.
Despite the fact that achieving perfect isolation is nearly impossible
(gravitational fields and cosmic rays are not easily screened out),
at least in principle this is perfectly legitimate in classical mechanics.
This defines $M_{E}=\{ x|H(x)=E\} $ as the invariant manifold.
Let $\langle f\rangle_{E}=\int_{M_{E}}\!f(x)\,\mathrm{d}\mu(x)$ denote
the phase space average of the observable $f$ on $M_{E}$. If $M_{E}$
were metrically indecomposable we would have $\overline{f}(x)=\langle f\rangle_{E}$
for almost all $x$ in $M_{E}$ for all essentially bounded observable
functions. In this setting the entropy is defined as $S_{S}(E)=k\ln\omega(E)$
with $\omega(E)=\int\!\mathrm{d}x\,\delta(H(x)-E)$, and one can go
on and define all the thermodynamic functions.  However this is not
the approach that is usually taken to build statistical mechanics. 

b) Indeed often one does not stop here but rather considers a thickened
shell $V_{\overline{E},\Delta}=\{x\in\Gamma|\overline{E}\le H(x)\le\overline{E}+\Delta\}$.
There seems to be no ontological reason to consider this setting,
as classically the energy of a truly isolated system is fixed to infinite
precision. The standard motivation is that this picture applies to
``almost isolated systems.'' The phase space averages over $M_{E}$
are replaced with uniform averages over $V_{\overline{E},\Delta}$.
In order for this to make sense one must require that for almost any
$E\in I=[\overline{E},\overline{E}+\Delta]$ the system is metrically
indecomposable. Note that the quantum equivalent of this situation
corresponds to a Hamiltonian whose spectrum $E_{n}$ in $[\overline{E},\overline{E}+\Delta]$
is non-degenerate. In quantum mechanics this is certainly a very common
property. For a given observable function $f$ (defined on $V_{\overline{E},\Delta}$)
we say it is $M$-ergodic if $\overline{f}(x)=\langle f\rangle_{E}$
for almost all $x\in M_{E}$ and for almost all $E\in I$. In this
setting phase space averages are defined as
\[
\langle f\rangle_{V_{\overline{E},\Delta}}\equiv\frac{\int_{V_{\overline{E},\Delta}}\!\mathrm{d}x\,f(x)}{\int_{V_{\overline{E},\Delta}}\!\mathrm{d}x}=\frac{\int_{I}\mathrm{d}E\,\omega(E)\langle f\rangle_{E}}{\int_{I}\mathrm{d}E\,\omega(E)},
\]
whereas the entropy is defined as $S_{V}(\overline{E})=\ln\Omega=\ln\int_{\overline{E}}^{\overline{E}+\Delta}\mathrm{d}E\,\omega(E)$. 

Of course, we want averages computed with approach a) to be equal
to those computed with b). Hence we require
\begin{equation}
\langle f\rangle_{E}=\langle f\rangle_{V_{\overline{E},\Delta}}\label{eq:ETH-C}
\end{equation}
 for almost all $E\in I$. This is clearly reminiscent of ETH-D. Truthfully,
this is equivalent to ETH-D only if $\Pi_{n}$ are one-dimensional
\footnote{Consider a) \unexpanded{$\langle A\rangle_{E_{n}}=\langle A\rangle_{V}$}
and b) \unexpanded{$\Pi_{n}A\Pi_{n}=\langle A\rangle_{V}\Pi_{n}$}.
Clearly b) $\Rightarrow$ a) but a) $\nRightarrow$ b) unless the
$\Pi_{n}$ are one-dimensional. }, so we call it ETH-C (classical). Its quantum mechanical version
is $\langle A\rangle_{E_{n}}=\langle A\rangle_{V}$.

Obviously if $f$ is $M$-ergodic and satisfies ETH-C then $f$ is
shell-ergodic, namely, $\overline{f}(x)=\langle f\rangle_{V_{\overline{E},\Delta}}$
for $V$-almost any $x\in V_{\overline{E},\Delta}$, which is what
we wanted. We see a clear parallel with the quantum world. It is also
clear that Eq.~(\ref{eq:ETH-C}), as well as shell-ergodicity, cannot
be satisfied for all functions $f$ (simply take an $f$ which is
not constant over different $M_{E}$) and in general can be valid
only approximately. It is the introduction of the shell $V_{\overline{E},\Delta}$
that forces us to consider approximate thermalization or ergodicity. 

The equivalence between approach a) and approach b) is usually not
discussed at length. A necessary condition is that $\langle f\rangle_{E}$
is a smooth function of $E$ and $\Delta$ sufficiently small. Assuming
$M_{\overline{E}}$ is sufficiently well behaved (a Lipschitz domain)
and $\langle f\rangle_{E}$ is differentiable as a function of $E$,
we have, as $\Delta\to0$, 
\begin{equation}
\langle f\rangle_{V_{\overline{E},\Delta}}=\langle f\rangle_{\overline{E}}+\frac{\Delta}{2}\langle f\rangle'_{\overline{E}}+O(\Delta^{2}).
\end{equation}
From the above, an estimate for the relative error is 
\begin{equation}
\left|\frac{\langle f\rangle_{V_{\overline{E},\Delta}}-\langle f\rangle_{\overline{E}}}{\langle f\rangle_{\overline{E}}}\right|\lesssim\frac{\Delta}{2\epsilon_{f}},
\end{equation}
where the energy scale $\epsilon_{f}$ is $\epsilon_{f}=\langle f\rangle_{\overline{E}}/|\langle f\rangle'_{\overline{E}}|.$ 

All in all, in order for $f$ to thermalize, we need metric indecomposability
for almost all $E\in I$, \emph{and} ETH-C, Eq.~(\ref{eq:ETH-C}),
for which a convenient proxy is given by $\Delta\ll\epsilon_{f}$.
For the Hamiltonian function $\epsilon_{H}=\overline{E}$ and we obtain
the standard requirement $\Delta/\overline{E}\ll1$. 

Let us now go back to the quantum realm. Now possibility a) is not
allowed for at least two reasons. First we can argue (as in \citep{landau_statistical_1980})
that the uncertainty in energy is a consequence of the system not
being exactly isolated. In this case one cannot be in an exact eigenstate
because of a time-energy uncertainty where $\Delta t$ is the duration
of the interaction process. Likewise, interactions with an environment
would cause a broadening of levels. These arguments do not apply to
a truly isolated system. For a truly isolated system however, we can
say that we could not define a meaningful entropy function. So considering
scenario b) becomes a necessity in quantum mechanics. 

Reproducing the classical argument, we then need metric indecomposability
for all the levels in a certain shell (now called $V$). As we have
seen, in quantum mechanics, this is simply the requirement that the
levels in $V$ are non-degenerate, a quite common property. After
that we still demand equality of the two scenarios, i.e., $\langle A\rangle_{E_{n}}=\langle A\rangle_{V}$
which, as we have seen, is equivalent to thermalization.

Note that any invariant state can be written as $\rho_{V}=\sum_{n}p_{n}\Pi_{n}/\tr\Pi_{n}$.
Then the phase space average can be written as
\[
\langle A\rangle_{V}=\sum_{E_{n}\in I}p_{n}\langle A\rangle_{E_{n}}=\frac{\int_{I}\mathrm{d}E\,\tilde{\omega}(E)\langle A\rangle_{E}}{\int_{I}\mathrm{d}E\,\tilde{\omega}(E)}
\]
where we introduced the density of levels function 
\[
\tilde{\omega}(E)\equiv\sum_{E_{n}\in I}\delta(E-E_{n})p_{n}
\]
 ---in other words it is formally precisely the classical one, upon
identifying $\tilde{\omega}=\omega$. 

\paragraph*{Conclusions. }

We show that a proper definition of ergodicity in the quantum framework
is equivalent to the standard notion of thermalization for all initial
states generally used in the literature. Moreover we prove that ergodicity
is equivalent to the diagonal part of the eigenstate thermalization
hypothesis (ETH) implying equivalence between ETH and thermalization,
thus resolving a current conjecture. The arguments are elementary
and in fact similar results also hold in the classical framework once
suitably translated to the corresponding language. Indeed, we show
that ETH is also present and implicitly assumed in the foundations
of classical statistical mechanics. One point where the analogy breaks
down is the conceptual impossibility in quantum mechanics of fixing
the energy of an isolated statistical system to infinite precision. 
\begin{acknowledgments}
This work was partially supported by the Air Force Research Laboratory
award no.~FA8750-18-1-0041 and (partially) by the Office of the Director
of National Intelligence (ODNI), Intelligence Advanced Research Projects
Activity (IARPA), via the U.S.~Army Research Office contract W911NF-17-C-0050.
The views and conclusions contained herein are those of the authors
and should not be interpreted as necessarily representing the official
policies or endorsements, either expressed or implied, of the ODNI,
IARPA, or the U.S.~Government. The U.S.~Government is authorized
to reproduce and distribute reprints for Governmental purposes notwithstanding
any copyright annotation thereon.
\end{acknowledgments}

\bibliographystyle{apsrev4-1}
\bibliography{ergodicity_ETH}

\appendix

\section{Proofs of the lemmas}
\begin{lem}
\label{lem:zero}If $\tr(X^{\dagger}X\rho)=0$ then $XP=0$ (and $PX^{\dagger}=0$)
where $P$ is the orthogonal projector onto the support of $\rho$.
\end{lem}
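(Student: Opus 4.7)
The plan is to exploit the positivity of both $\rho$ and $X^{\dagger}X$ to turn the vanishing trace into a pointwise statement. The cleanest route is spectral: diagonalize $\rho$ as $\rho=\sum_{i}p_{i}|i\rangle\langle i|$ with $p_{i}\ge 0$, where the support of $\rho$ is the span of those $|i\rangle$ with $p_{i}>0$, and $P=\sum_{i:p_{i}>0}|i\rangle\langle i|$. Then by cyclicity of the trace,
\[
0=\tr(X^{\dagger}X\rho)=\sum_{i}p_{i}\,\langle i|X^{\dagger}X|i\rangle=\sum_{i}p_{i}\,\Vert X|i\rangle\Vert^{2}.
\]
Every summand is non-negative, so each must vanish individually. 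For $i$ with $p_{i}>0$ this forces $X|i\rangle=0$.

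Summing $X|i\rangle\langle i|=0$ over those $i$ gives $XP=0$, which is the first claim. Taking the adjoint and using $P^{\dagger}=P$ yields $PX^{\dagger}=0$, the second claim.

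An equivalent, basis-free phrasing would be to note that $\tr(X^{\dagger}X\rho)=\tr\bigl((X\rho^{1/2})^{\dagger}(X\rho^{1/2})\bigr)=\Vert X\rho^{1/2}\Vert_{HS}^{2}$, so $X\rho^{1/2}=0$; since the range of $\rho^{1/2}$ coincides with the support of $\rho$, any vector $|v\rangle$ in $\mathrm{range}(P)$ has the form $\rho^{1/2}|w\rangle$, whence $X|v\rangle=0$ and therefore $XP=0$. I would probably include both formulations, since the spectral one is the most transparent and the Hilbert--Schmidt one connects naturally to the notation used elsewhere in the paper.

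There is really no obstacle here: the only subtle point is the identification of $\mathrm{supp}(\rho)$ with $\mathrm{range}(\rho^{1/2})$ (equivalently, with the span of eigenvectors of $\rho$ with strictly positive eigenvalue), and this is immediate from the spectral theorem in finite dimension, which is the setting the paper has already restricted to.
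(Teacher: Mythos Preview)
Your proof is correct and essentially coincides with the paper's: the paper writes $\rho=(P\sqrt{\rho}P)^{2}$, concludes $\Vert XP\sqrt{\rho}P\Vert_{HS}^{2}=0$, and then inverts $\sqrt{\rho}$ on the support to get $XP=0$, which is exactly your second, basis-free formulation. Your spectral version is just the same argument carried out in an eigenbasis of $\rho$.
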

\begin{proof}
Because $\rho=P\sqrt{\rho}PP\sqrt{\rho}P$ (since $P$ and $\rho$ commute),
\[
\tr(X^{\dagger}X\rho)=\tr[(XP\sqrt{\rho}P)^{\dagger}(XP\sqrt{\rho}P)] =\Vert XP\sqrt{\rho}P\Vert _{HS}^{2}=0,
\]
 which implies $XP\sqrt{\rho}P=0.$ Inside the range of $P$, $\sqrt{\rho}$
is invertible and we can multiply by $P\sqrt{\rho}^{-1}P$ and get
$XP=0$. 
\end{proof}
\begin{lem}
\label{lem:nondegenerate}If $\tr(X\rho_{0})=0$, $\forall\rho_{0}\in\mathcal{S}_{V}$
then $\Pi_{V}X\Pi_{V}=0$ (in fact the two statements are equivalent).
\end{lem}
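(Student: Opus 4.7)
The plan is to first absorb the projector $\Pi_V$ into the operator by using that every $\rho_0\in\mathcal{S}_V$ satisfies $\rho_0=\Pi_V\rho_0\Pi_V$. Using cyclicity of the trace,
\[
\tr(X\rho_0)=\tr(X\Pi_V\rho_0\Pi_V)=\tr\bigl(\Pi_V X\Pi_V\,\rho_0\bigr),
\]
so the hypothesis becomes: $\tr(Y\rho_0)=0$ for all $\rho_0\in\mathcal{S}_V$, where $Y\equiv\Pi_V X\Pi_V$ is an operator acting on $\mathcal{H}_V$. The goal is to conclude $Y=0$.

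For the main step, I would specialize to pure states supported in $\mathcal{H}_V$. Taking $\rho_0=|\psi\rangle\langle\psi|$ with $|\psi\rangle\in\mathcal{H}_V$ arbitrary (and normalized, though normalization is irrelevant by homogeneity), the hypothesis yields $\langle\psi|Y|\psi\rangle=0$ for every $|\psi\rangle\in\mathcal{H}_V$. It is a standard fact on a complex Hilbert space that if the diagonal matrix elements of an operator vanish in every vector of a subspace, the operator must vanish on that subspace; this follows by polarization, i.e., by expanding $\langle\psi+\alpha\phi|Y|\psi+\alpha\phi\rangle=0$ for $\alpha\in\{1,i\}$ and subtracting. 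This forces $\langle\phi|Y|\psi\rangle=0$ for all $|\phi\rangle,|\psi\rangle\in\mathcal{H}_V$, hence $\Pi_V X\Pi_V=Y=0$.

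The converse (equivalence) is immediate: if $\Pi_V X\Pi_V=0$, then for any $\rho_0\in\mathcal{S}_V$, $\tr(X\rho_0)=\tr(\Pi_V X\Pi_V\,\rho_0)=0$.

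There is no real obstacle here; the only subtle point is that $Y$ need not be Hermitian, so one must invoke the complex polarization identity rather than the real one that is often quoted for self-adjoint operators. Alternatively, one may split $Y=Y_R+iY_I$ into Hermitian and anti-Hermitian parts and note that $\langle\psi|Y|\psi\rangle=0$ separately implies $\langle\psi|Y_R|\psi\rangle=0$ and $\langle\psi|Y_I|\psi\rangle=0$, each of which forces the corresponding part to vanish by the Hermitian version of polarization.
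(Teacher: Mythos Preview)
Your proof is correct. The approach differs from the paper's: the paper works in Hilbert--Schmidt space, observing that $\mathcal{S}_V$ contains $n^2$ linearly independent matrices (with $n=\dim\mathcal{H}_V$) whose span is $\mathcal{B}(\mathcal{H}_V)=\mathrm{span}\{|e_i\rangle\langle e_j|\}$, so that $\langle X^\dagger|\rho_0\rangle_{HS}=0$ for all $\rho_0\in\mathcal{S}_V$ forces the $\Pi_V X\Pi_V$ block of $X$ to vanish. You instead restrict to pure states and use the complex polarization identity to pass from vanishing diagonal matrix elements to vanishing of the operator. Both routes encode the same underlying fact---that density matrices on $\mathcal{H}_V$ span $\mathcal{B}(\mathcal{H}_V)$---but yours is more explicit and self-contained, while the paper's dimension-counting argument is terser but leaves the spanning claim unproved. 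Your remark about needing complex (rather than real) polarization because $Y$ need not be Hermitian is a nice point that the paper's argument sidesteps by working directly with the full linear span.
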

\begin{proof}
Although $\mathcal{S}_{V}$ is not a linear space it is possible to
find $n^{2}$ linearly independent matrices in it, where $n=\dim\mathcal{H}_{V}$.
The equation $\tr(X\rho_{0})=0$ can be written as $\langle X^{\dagger}|\rho_{0}\rangle_{HS}=0$
using the Hilbert--Schmidt scalar product. Let $|e_{i}\rangle$,
$i=1,2,\ldots,n$ be a basis of $\mathcal{H}_{V}$. Then the last
equation means that $|X\rangle_{HS}$ is perpendicular to $\mathrm{span}\{|e_{i}\rangle\langle e_{j}|\}$.
This means that $X=\Pi_{V}XQ+QX\Pi_{V}+QXQ$ with $Q=\1-\Pi_{V}$.
In other words $\Pi_{V}X\Pi_{V}=0$.
\end{proof}

\section{Proofs of the $\epsilon$-relations}
\begin{proof}[Proof of Proposition 4.]
The equivalence of ETH-D and strong shell-ergodicity to precision
$\epsilon$ was established in the main text. Here we complete the
proof by establishing an equivalence between thermalization and strong
shell-ergodicity.

Without loss of generality assume $\Vert A\Vert =1$. Consider
$X\equiv(\overline{A(t)}-\langle A\rangle_{V})\Pi_{V}=\Pi_{V}X\Pi_{V}$.
Note that $X$ is hermitian if $A$ is. The observable $A$ thermalizes
if $|\tr(X\rho_{0})|\le\epsilon$ for all $\rho_{0}\in\mathcal{S}_{V}$.
Since $\rho_{0}$ is a state we have $|\tr(X\rho_{0})|\le\Vert X\Vert $.
Hence $A$ strongly $\epsilon$-shell-ergodic implies $A$ $\epsilon$-thermalizes.
But we also have 
\begin{equation}
\Vert \Pi_{V}X\Pi_{V}\Vert =\sup_{\rho\in\mathcal{S}_{V}}|\tr(X\rho)|.
\end{equation}
 So $|\tr(X\rho)|\le\epsilon$ for all $\rho\in\mathcal{S}_{V}$
implies that the left hand side is also $\le\epsilon$, i.e., we have
the other direction with the same $\epsilon$.
\end{proof}
\begin{proof}[Proof of Proposition 5.]
Without loss of generality assume $\Vert A\Vert =1$.
First the $\Rightarrow$ direction. With the same $X$ defined previously
we have $X^{2}=[\overline{A}\,\overline{A}-2\overline{A}\langle A\rangle_{V}+(\langle A\rangle_{V})^{2}]\Pi_{V}$.
Hence $|\overline{\langle A(t)A\rangle_{V}}-(\langle A\rangle_{V})^{2}|=|\tr(X^{2}\rho_{V})|\le\Vert X^{2}\Vert =\Vert X\Vert ^{2}\le\epsilon^{2}$.
For the other direction note that for $Y$ a positive definite operator
($Y\ge0$) we have $\Vert \Pi_{V}Y\Pi_{V}\Vert \le|\tr(Y\rho_{V})|\Vert \rho_{V}^{-1}\Pi_{V}\Vert $.
In fact $\tr(Y\rho_{V})=\tr(\Pi_{V}Y\Pi_{V}\rho_{V})$
where $\Pi_{V}Y\Pi_{V}$ is again positive with spectral resolution
$\Pi_{V}Y\Pi_{V}=\sum_{n}\lambda_{n}|n\rangle\langle n|$. Hence $\tr(Y\rho_{V})=\sum_{n}\lambda_{n}\langle n|\rho_{V}|n\rangle$
with $\lambda_{n}\ge0$ and $\langle n|\rho_{V}|n\rangle\ge\min_{j}f_{j}$
where $f_{j}$ are the non-zero eigenvalues of $\rho_{V}$. Since
$1/\min_{j}f_{j}=\max_{j}1/f_{j}=\Vert \Pi_{V}\rho_{V}^{-1}\Pi_{V}\Vert $
we obtain 

\begin{equation}
\Vert \Pi_{V}Y\Pi_{V}\Vert \le\tr(\Pi_{V}Y\Pi_{V}\rho_{V})\Vert \rho_{V}^{-1}\Pi_{V}\Vert .
\end{equation}
 Now use this with $Y=X^{2}$ to obtain
\begin{align}
\Vert(\overline{A(t)}-\langle A\rangle_{V})\Pi_{V}\Vert^{2} & =\Vert \Pi_{V}X\Pi_{V}\Vert ^{2}\nonumber \\
 & =\Vert \Pi_{V}X^{2}\Pi_{V}\Vert \nonumber \\
 & \le|\overline{\langle A(t)A\rangle_{V}}-(\langle A\rangle_{V})^{2}|\Vert \rho_{V}^{-1}\Pi_{V}\Vert ,\label{eq:bound_X}
\end{align}
from which the result follows. 

\end{proof}

\end{document}